\newtheorem{theorem}{Theorem}[section]
\newtheorem{proposition}[theorem]{Proposition}
\newtheorem{lemma}[theorem]{Lemma}
\newcommand{\rank}{\operatorname{rank}}
\theoremstyle{definition}
\newtheorem{definition}[theorem]{Definition}
\theoremstyle{remark}
\newtheorem{remark}[theorem]{Remark}
\newcommand{\al}{\alpha}
\newcommand{\de}{\delta}
\newcommand{\ep}{\varepsilon}
\newcommand{\ka}{\kappa}
\newcommand{\vp}{\varphi}
\newcommand{\De}{\Delta}
\newcommand{\Si}{\Sigma}
\newcommand{\Om}{\Omega}
\newcommand{\cU}{{\mathcal U}}
\DeclareMathOperator\Real{Re}
\DeclareMathOperator\Imag{Im}
\def\CC{\mathbb{C}}
\def\RR{\mathbb{R}}
\newcommand{\cM}{{\mathcal M}}
\newcommand{\pd}{\partial}
\newcommand\minus\backslash
\newcommand{\id}{{\rm id}}
\newcommand\lan\langle
\newcommand\ran\rangle
\DeclareMathOperator\Div{div} 
\DeclareMathOperator\diag{diag}
\newcommand{\pdt}{\pd_{\mathrm{T}}\Om}
\newcommand{\pdb}{\pd_{\mathrm{B}}\Om}
\newcommand{\pdl}{\pd_{\mathrm{L}}\Om}
\renewcommand\leq\leqslant
\renewcommand\geq\geqslant
\newlength{\intwidth}
\newcommand\BOm{\overline\Om}
\numberwithin{equation}{section}
 \DeclareMathOperator\curl{curl}
\begin{document}

\title[Entanglement of vortices]{Entanglement of vortices\\ in the Ginzburg--Landau equations for superconductors}

 \author{Alberto Enciso}
 \address{Instituto de Ciencias Matem\'aticas, Consejo Superior de
   Investigaciones Cient\'\i ficas, 28049 Madrid, Spain}
 \email{aenciso@icmat.es}

 \author{Daniel Peralta-Salas}
 \address{Instituto de Ciencias Matem\'aticas, Consejo Superior de
   Investigaciones Cient\'\i ficas, 28049 Madrid, Spain}
 \email{dperalta@icmat.es}

%
%
\begin{abstract}
In 1988, Nelson proposed that neighboring vortex lines in high-temperature superconductors may become entangled with each other. In this article we construct solutions to the Ginzburg--Landau equations which indeed have this property, as they exhibit entangled vortex lines of arbitrary topological complexity.
\end{abstract}

\maketitle

\section{Introduction}

In the 1950s, Vitaly Ginzburg and Lev Landau developed a powerful phenomenological theory to provide a mathematical description of superconductors near the critical temperature, see e.g.~\cite{survey,Tink,Chap13}. The theory revolves around the so-called  {\em Ginzburg--Landau equations}\/, whose non-dimensional form is
\begin{align*}
\Big(\frac{1}{\kappa}\nabla -iA\Big)^2\Psi&=(|\Psi|^2-1)\Psi\,,\\
-\curl\curl A&=\frac{i}{2\kappa}(\bar\Psi\nabla\Psi-\Psi\nabla\bar\Psi)+|\Psi|^2A\,.
\end{align*}
The unknowns are the vector field $A:\Omega\to\RR^3$, which is the magnetic vector potential, and the complex-valued function $\Psi:\Omega\to\CC$, which is the order parameter representing the superconducting electron pairs. Here $\Omega\subset\RR^3$  is a bounded domain and $\kappa>0$ is the Ginzburg--Landau parameter, whose value determines whether the superconductor is of type I or II. The usual boundary conditions are
\begin{equation}\label{E.BC}
N\cdot (\nabla\Psi-i\Psi A)=0
\end{equation}
on $\pd\Om$.

A major problem in the study of high-temperature superconductors is the question of whether (and how) neighboring vortex lines in the glass or liquid phase may
entangle around each other~\cite{Berdi,NelsonN,RNature,RH04}, as was proposed by Nelson~\cite{Nelson} in 1988. Despite more than three decades of theoretical, numerical and experimental studies, the question of whether vortices can form an entangled state has not yet been convincingly answered. We recall that vortices are defined as the nodal lines
\[
Z_\Psi:=\{x\in\Om: \Psi(x)=0\}\,,
\]
which are the defects or singularities of the phase of the order parameter $\Psi$.

Our objective in this paper is to construct solutions to the Ginzburg--Landau system that indeed exhibit entangled vortices with complicated topologies. The solutions we shall construct have small amplitude, so it is convenient to write the Ginzburg--Landau equations as
\begin{align*}
(\ka^{-2}\Delta+1)\Psi&=2i\ka^{-1}A\cdot\nabla\Psi+|A|^2\Psi+|\Psi|^2\Psi+i\ka^{-1}\Div A\Psi\,,\\
\Delta A&=\nabla\Div A +\frac{i}{2\ka}(\bar\Psi\nabla\Psi-\Psi\nabla\bar\Psi)+|\Psi|^2A\,,
\end{align*}
where the Laplacian $\Delta$ acts on the vector field $A$ componentwise.

Given any smooth function $\chi:\Om\to\RR$, it is easy to check that the gauge transformation
\begin{equation*}
(A,\Psi)\mapsto (A+\nabla\chi,\; \Psi e^{i\chi})
\end{equation*}
takes solutions to solutions, so one can always pick a gauge where $A$ is divergence-free. However, for the construction later, it will be convenient not to fix such a gauge.

For concreteness, throughout the paper we will consider the Ginzburg--Landau equations on the cylindrical domain
$$\Om :=D_\rho\times (-\ell,\ell)\,,$$
as is usually done in applications. Here $\ell>0$ is the height of the cylinder and~$\rho$ is its radius; $D_\rho$ denotes the two-dimensional disk of radius $\rho$ centered at the origin.

Nelson's hypothesis concerns the existence of entangled (i.e., braided) vortices in high-temperature superconductors, whose precise definition is the following:

\begin{definition}
A {\em braid}\/ in the domain $\Om=D_\rho\times (-\ell,\ell)$ is a finite collection of pairwise disjoint smooth lines, diffeomorphic to $(-\ell,\ell)$, contained in $\Om$, whose endpoints are one at the bottom boundary $\pdb:=D_\rho\times\{-\ell\}$ and the other at the top boundary $\pdt:=D_\rho\times\{\ell\}$. These lines may be knotted and linked~\cite[Chapter~5.4]{Adams}.
\end{definition}

The following theorem is the main result of this article. We show that there exist solutions $(A,\Psi)$ to the Ginzburg--Landau equations on the cylinder~$\Om$, satisfying the natural boundary conditions on the lateral boundary $\pdl:= \pd D_\rho\times (-\ell,\ell)$, which exhibit a subset of isolated vortex lines isotopic to any prescribed braid $L$ in $\Om$. These vortex lines are structurally stable in the sense that any other complex-valued function that is close to $\Psi$ exhibits an isotopic subset of vortex lines.

\begin{theorem}\label{T:main}
Let $L$ be a braid in the cylinder $\Om $ and fix any positive integer~$r$ and any $\ep>0$. Then there exists a solution $(A,\Psi)$ to the Ginzburg--Landau equations in $\Om $ satisfying the boundary condition~\eqref{E.BC} on~$\pdl$,  such that $\Phi(L)$ is a subset of isolated vortex lines of $\Psi$.

Here  $\Phi $  is a smooth diffeomorphism of~$\BOm$, close to the identity in the sense that $\|\Phi-\id\|_{C^r(\Om)}<\ep$. Furthermore, these vortex lines are structurally stable. More precisely, take an open set $V\subset\Om$ which contains~$L$ and any~$\ep'>0$. Then there exists some $\de>0$ such that any function $\Psi_1$ satisfying $\|\Psi-\Psi_1\|_{C^r(V)}<\de$ has a subset of vortex lines given by $\Phi_1\circ\Phi(L)$, where $\Phi_1$ is a diffeomorphism of~$\BOm$ bounded as $\|\Phi_1-\id\|_{C^r(\Om)}<\ep'$.
\end{theorem}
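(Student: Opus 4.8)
The plan is to reduce the construction of Ginzburg--Landau solutions with prescribed braided vortices to two essentially independent inputs: (a) a high-ground-state existence/perturbation theorem for the nonlinear elliptic system, producing genuine solutions $(A,\Psi)$ near a suitable approximate solution; and (b) a "realization" result for the linearized problem, showing that one can build a solution of the \emph{linear} Ginzburg--Landau equation whose nodal set contains a copy of the braid $L$. The role of the small parameter is the inverse Ginzburg--Landau constant: setting $\la=\ka^{-1}$ (or, equivalently, rescaling the amplitude of $\Psi$), the nonlinearities in the reformulated system are of order $O(\la)$ or $O(|\Psi|^2)$, so that to leading order $\Psi$ solves a Helmholtz-type equation $(\ka^{-2}\De+1)\Psi_0=0$ with the Neumann-type condition~\eqref{E.BC} on $\pdl$, and $A_0$ is harmonic; the braid will be engineered into $\Psi_0$, and the full solution will be obtained by a fixed-point/implicit-function argument that perturbs $\Psi_0$ by an amount small in $C^r$, hence not destroying the transverse nodal structure.

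Concretely, I would proceed as follows. \emph{Step 1 (the local model).} Work locally near a single arc of the braid; after a diffeomorphism of $\BOm$ close to the identity, arrange the arc to be a vertical segment $\{0\}\times(-\ell,\ell)$ and produce a local holomorphic-type factor $w = x_1+ix_2$ whose zero set is exactly that segment and whose zero is transverse (nondegenerate) so that it persists under $C^1$-small perturbations. \emph{Step 2 (global approximate solution).} Patch these local models into a global function $\Psi_0$ on $\Om$ that solves $(\ka^{-2}\De+1)\Psi_0=0$ with the boundary condition on $\pdl$ and whose nodal set, up to a small ambient diffeomorphism $\Phi$, contains $\Phi(L)$; this is where one invokes a runaway/Runge-type approximation theorem for the Helmholtz operator on the cylinder, approximating a model solution carrying the desired braid by a global solution, in the high-frequency regime $\ka\to\infty$ (equivalently, rescaling space so that the relevant wavelength is $O(\ka^{-1})$). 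Choose $A_0$ to solve the corresponding linear (Poisson/Laplace) equation with $\Psi_0$ plugged into the right-hand side; since $\Psi_0$ is small in amplitude this is a small perturbation. \emph{Step 3 (nonlinear correction).} Write $(A,\Psi)=(A_0+a,\Psi_0+\psi)$, linearize the Ginzburg--Landau system around $(A_0,\Psi_0)$, and solve the resulting equation for $(a,\psi)$ by a contraction mapping in a space with $C^r$ control; invertibility of the linearized operator modulo the gauge freedom (which is why we do \emph{not} fix the divergence-free gauge) and the smallness of the quadratic/cubic terms yield a solution with $\|(a,\psi)\|_{C^r}$ as small as desired. \emph{Step 4 (structural stability).} Because the zeros of $\Psi_0$ along each arc are transverse, the implicit function theorem (applied to the map $x\mapsto\Psi(x)\in\CC\simeq\RR^2$ on slices transverse to the braid) shows that any $\Psi_1$ with $\|\Psi-\Psi_1\|_{C^r(V)}<\de$ still has a nodal subset, which is the image of $L$ under a diffeomorphism $\Phi_1$ close to the identity; this gives the last assertion of the theorem with the stated bound.

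The main obstacle I expect is \emph{Step 2}: producing a global solution of the linear problem on the bounded cylinder, respecting the boundary condition on the lateral boundary $\pdl$, whose nodal set realizes an \emph{arbitrary} braid with prescribed knotting and linking. One must both (i) have a flexible local model (a polynomial or trigonometric solution of Helmholtz whose nodal set is a controlled tangle), and (ii) glue/approximate it globally without introducing spurious nearby zeros that would change the isotopy class --- this requires a quantitative Runge-type approximation theorem for the Helmholtz equation together with a transversality argument to rule out extra components. A secondary technical point is ensuring compatibility with~\eqref{E.BC} on $\pdl$ while keeping the braid in the interior; this can be handled by placing the braid well inside $\Om$ and using a cutoff plus a correction that is supported near $\pdl$ and is small in $C^r$, so it does not touch the nodal set. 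Everything else --- the nonlinear perturbation in Step 3 and the stability in Step 4 --- is, by contrast, a routine (if careful) application of standard elliptic estimates and the implicit function theorem, made possible by the smallness of the amplitude and the transversality of the vortices.
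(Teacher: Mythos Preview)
Your strategy is essentially the paper's: realize the braid as the transverse nodal set of a solution of the Helmholtz equation on the cylinder with Neumann lateral data via a Runge-type approximation, then upgrade to a Ginzburg--Landau solution by a contraction/iteration in H\"older spaces, and finally invoke transversality and Thom's isotopy theorem for the structural stability. A few technical points in your outline need adjustment, however. First, the small parameter is purely the \emph{amplitude} of $(A,\Psi)$, not $\ka^{-1}$: the paper keeps $\ka$ fixed throughout and takes no high-frequency limit; the only spectral maneuver is to choose an auxiliary smooth domain $\Om_1$ with $\Om\subset\Om_1\subset\Om'$ on which $\ka^2$ is not a Neumann eigenvalue, so that the linearized Helmholtz problem in the iteration is uniquely solvable. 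Second, your local model $x_1+ix_2$ is harmonic but does not satisfy $(\ka^{-2}\De+1)u=0$; the paper instead writes each analytic strand as the transverse intersection of two analytic surfaces and uses Cauchy--Kovalevskaya to produce genuine local Helmholtz solutions $u^1_k+iu^2_k$ vanishing transversally on the strand. Third, rather than a cutoff-plus-correction near $\pdl$, the paper runs the Hahn--Banach ``sweeping of singularities'' argument directly with the \emph{Neumann} Green's function of a slightly taller cylinder, so the global approximant $\psi$ satisfies $N\cdot\nabla\psi=0$ on $\pdl$ automatically. Finally, for the vector potential the paper solves $\De\widetilde A=F(\Psi,A)$ with \emph{relative} boundary conditions ($\widetilde A\times N=0$ and $\Div\widetilde A=0$ on $\pd\Om_1$) and then shows a posteriori that $\Div A\equiv 0$ in $\Om_1$ using $\Div F=0$ for solutions; this is the precise mechanism behind your remark about not fixing the divergence-free gauge.
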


Several remarks are in order. First, the nodal set $Z_\Psi$ may contain other components, but they are at a positive distance from $\Phi(L)$ because $\Phi(L)$ is an isolated nodal set. Second, note that there is no Neumann condition at the bottom and top boundaries of $\Om$, but this does not seem to be very important for the braid structure. Third, we emphasize that $(A,\Psi)$ is a small amplitude solution, so proving a similar result for large values of the order parameter function $\Psi$ (which forces the Ginzburg--Landau system to operate far from the linear regime) remains an interesting, and probably very hard, open problem.

The proof of Theorem~\ref{T:main}, which consists of five steps, is presented in Section~\ref{S.proof}. The proof of the auxiliary Proposition~\ref{L.psi}, which is more technical, is relegated to Section~\ref{S.psi}.

\section{Proof of the theorem}\label{S.proof}

The solution $(A,\Psi)$ we construct to the Ginzburg--Landau equations is a perturbation of a solution to the linearized equations, so the Ginzburg--Laudan system essentially operates in the monochromatic wave regime. The proof is divided in five steps. In the first one we construct a solution $\psi$ to the Helmholtz equation in $\Om$ which exhibits a subset of structurally stable vortex lines that are isotopic to the braid $L$. This solution is promoted to a solution of the nonlinear Ginzburg--Landau equations by using a fixed point argument in steps~2 to~4. To achieve this, it is crucial to solve certain elliptic PDEs in a cylindrical domain with suitable boundary conditions. The proof is completed in step~5 by using the structural stability of the nodal lines of $\psi$. All along this section, the integer~$r$ (which appears in the statement of Theorem~\ref{T:main}) is an arbitrary positive integer.

\subsection*{Step~1: A monochromatic wave with prescribed braided vortex lines}

In this first step we consider the extended cylinder $\Om':= D_\rho\times(-2\ell,2\ell)$ and a braid $L'\subset \Om'$ that is an ``extension'' of the braid $L$, i.e., $L'\cap \Om=L$. We prove that there is a monochromatic wave in $\Om'$ with suitable boundary conditions whose nodal set contains a braid that is isotopic to $L'$, the isotopy being as close to the identity as desired. In the proof we use techniques that we developed to study the level sets of harmonic functions in Euclidean space~\cite{EP13} and to tackle a conjecture of Berry on knotted nodal lines for eigenfunctions of some Schrodinger operators~\cite{jems18}.

The main new difficulty that arises is that we want to construct solutions in a bounded domain with Neumann boundary data, so the method of proof in~\cite{EP13,jems18} does not work directly and new technicalities have to be introduced. For future reference, for any $\ep>0$, we will say that two $C^r$ functions $f_1,f_2$ on~$\Om'$ are {\em $\ep$-close in the $C^r$-norm} if $\|f_1-f_2\|_{C^r(\Om')}<C\ep$ for some $\ep$-independent constant $C$.

\begin{proposition}\label{L.psi}
For any $\ep>0$, there exists a function $\psi\in C^\infty(\overline{\Om'},\mathbb C)$ and a diffeomorphism $\Phi_0:\overline{\Om'}\to \overline{\Om'}$ such that:
\begin{enumerate}
	\item $\psi$ satisfies the Helmholtz equation $(\ka^{-2}\De+1)\psi=0$ in $\Om'$ and the Neumann boundary condition $N\cdot \nabla\psi=0$ on the lateral boundary $\partial_L\Om':=\pd D_\rho\times(-2\ell,2\ell)$.
	\item $\Phi_0(L')$ is a subset of structurally stable vortex lines of $\psi$. More precisely, take an open set $V\subset\Om'$ such that~$V$ contains the braid $L'$. Then there exists some $\de>0$ such that any function $\psi_1\in C^r(V)$ satisfying $$\|\psi-\psi_1\|_{C^r(V)}<\de$$ has a subset of vortex lines~$S_1$ with $\Phi_1(S_1)= \Phi_0(L')$. Here $\Phi_1$ is a diffeomorphism of~$\overline V$ which is $\delta$-close to the identity in the $C^r$-norm.
	\item $\Phi_0$ is $\ep$-close to the identity in the $C^r$-norm on~$\Om'$.
\end{enumerate}
\end{proposition}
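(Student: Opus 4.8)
The plan is to follow the strategy of \cite{EP13,jems18} for nodal sets of prescribed topology, now adapted to the cylinder $\Om'$ with the lateral Neumann condition. There are three stages: build a solution of the Helmholtz equation on a small neighborhood of $L'$ whose nodal set contains a braid isotopic to $L'$; globalize it, keeping it $C^r$-close to this local solution near $L'$, to a solution of the Helmholtz equation on all of $\Om'$ that satisfies $N\cdot\nabla\psi=0$ on $\pd_L\Om'$; and finally extract the structural stability from the nondegeneracy of the zeros.

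\emph{Local solution.} Since $L'$ is a finite union of pairwise disjoint smooth arcs lying in the interior of $\Om'$ except for their endpoints on the top and bottom faces, each arc has trivial normal bundle, and one first writes down a smooth complex-valued local model $w$ on a tubular neighborhood $U=\bigsqcup_a T_a$ of $L'$ with $w^{-1}(0)=L'$ — taking $w$ to be the complex coordinate transverse to each arc — for which $0$ is a regular value and $|w|$ is bounded below outside any prescribed smaller neighborhood of $L'$. The analysis of the local structure of solutions near a curve developed in \cite{EP13,jems18}, which is purely interior and so applies here verbatim because $L'$ does not meet $\pd_L\Om'$, then upgrades $w$ to an actual solution $v$ of $(\ka^{-2}\De+1)v=0$ on a neighborhood $N$ of $L'$ whose nodal set $Z_v\cap N$ is a braid isotopic to $L'$, with $0$ a regular value of $v$ along it and $|v|$ bounded below on $N$ away from it.

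\emph{Globalization with the Neumann condition.} This is the heart of the matter, and the point at which \cite{EP13,jems18} do not apply directly. What is needed is a Runge-type theorem for the Helmholtz operator on $\Om'$ with the lateral Neumann condition: if $v$ solves $(\ka^{-2}\De+1)v=0$ on a neighborhood of a compact set $K$ that is disjoint from $\pd_L\Om'$ and does not disconnect $\Om'$ — both of which hold for $K$ a closed tubular neighborhood of $L'$, since deleting a thin tube around a collection of disjoint arcs leaves the solid cylinder connected — then $v$ can be approximated in $C^r(K)$ by functions $\psi$ solving $(\ka^{-2}\De+1)\psi=0$ in all of $\Om'$ with $N\cdot\nabla\psi=0$ on $\pd_L\Om'$. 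In Euclidean space the global solutions are finite superpositions of plane waves $e^{i\ka\,\xi\cdot x}$ with $|\xi|=1$, and the theorem reduces to a Paley--Wiener and division argument for the symbol $|\eta|^2-\ka^2$ together with a Lax--Malgrange step. Here no plane wave obeys the curved Neumann condition, so the plane waves must be replaced by the separated solutions $\varphi_j(x')\,g_j^{\pm}(z)$, where $\varphi_j$ runs over the Neumann eigenfunctions of the disk $D_\rho$, with eigenvalues $\mu_j$, and $g_j^{\pm}$ are the two solutions of $g''=(\mu_j-\ka^2)g$; and the whole apparatus — identifying the compactly supported distributions that annihilate every finite combination of these, the attendant division step for the symbol in the adapted transform, and the passage from a coaxial subcylinder (on which solutions expand explicitly in this basis) to the tube around $L'$ — must be rebuilt with the boundary condition built in. An alternative route via Hahn--Banach duality, solving a self-adjoint adjoint mixed boundary value problem and invoking unique continuation for the Helmholtz equation, meets the same difficulty, namely the bookkeeping of the boundary terms in Green's identity. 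I expect this step to be the main obstacle; it is also why one cannot shortcut the construction by reflection, the lateral boundary being curved.

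\emph{Conclusion.} Applying the approximation theorem to $v$ yields a solution $\psi$ of the Helmholtz boundary value problem — a finite combination of the separated solutions above, hence in $C^\infty(\overline{\Om'})$ — with $\|\psi-v\|_{C^r(K)}$ as small as desired; this is conclusion (i). Since $0$ is a regular value of $v$ along $Z_v\cap N$ and $|v|$ is bounded below on $N$ away from that braid, for $\psi$ close enough the value $0$ is a regular value of $\psi$ as well and $Z_\psi\cap N$ is a braid $C^r$-close to $Z_v\cap N$; composing the two small isotopies — the one from the local stage and the one coming from $\|\psi-v\|_{C^r(K)}$ being small, realized by an ambient isotopy of $\overline{\Om'}$ supported near $L'$ — gives a diffeomorphism $\Phi_0$ of $\overline{\Om'}$ that is $\ep$-close to the identity in the $C^r$-norm, with $\Phi_0(L')=Z_\psi\cap N$; this is the first part of (ii) together with (iii). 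For the structural stability in (ii), observe that $0$ is a regular value of $\psi$ along $\Phi_0(L')$, i.e. the hypersurfaces $\{\Real\psi=0\}$ and $\{\Imag\psi=0\}$ intersect transversally there; hence for any open $V\supset L'$ and any $\psi_1\in C^r(V)$ with $\|\psi-\psi_1\|_{C^r(V)}$ small enough, the implicit function theorem produces a subset $S_1\subset Z_{\psi_1}$ and a diffeomorphism $\Phi_1$ of $\overline V$, $\de$-close to the identity in the $C^r$-norm, with $\Phi_1(S_1)=\Phi_0(L')$.
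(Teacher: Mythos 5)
Your outline agrees with the paper's strategy at the two ends --- a local Cauchy--Kovalevskaya-type solution of the Helmholtz equation near $L'$ with nondegenerate zero set, and Thom's isotopy theorem for the isotopy $\Phi_0$ and the structural stability --- but the middle stage, which you yourself flag as ``the main obstacle,'' is left unproved, and that is precisely the step that carries the content of the proposition. Without a global approximation theorem that preserves the lateral Neumann condition there is no $\psi$ at all, hence no conclusion (i); and neither of the two routes you sketch (rebuilding the plane-wave/division apparatus with separated solutions $\varphi_j(x')g_j^{\pm}(z)$, or a duality argument whose boundary bookkeeping you say you cannot control) is actually carried out. As it stands the proposal has a genuine gap at its central step.

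The paper closes this gap with a device you did not consider: extend the cylinder only in the axial direction to $\widetilde\Om=D_\rho\times(-\widetilde\ell,\widetilde\ell)$, with $\widetilde\ell\in(2\ell,3\ell)$ chosen (by eigenvalue monotonicity) so that $\ka^2$ is not a Neumann eigenvalue of $-\Delta$ there, and use the \emph{Neumann Green's function} $G$ of $\widetilde\Om$ as the building block. Each $G(\cdot,z)$ satisfies $N\cdot\nabla_x G=0$ on all of $\pd\widetilde\Om$, hence on $\partial_L\Om'\subset\pd\widetilde\Om$, so the boundary condition is built into the approximants and no separated-variable expansion or division step is needed. One cuts off the local solution, represents the cutoff as $\int_{\widetilde\Om}G(x,y)\rho(y)\,dy$, discretizes this into a finite sum $\sum_j\rho_j G(x,x_j)$ with poles off a compact set $S\supset\overline{L'}$, and then sweeps the poles into the slab $D_\rho\times(\widetilde\ell-\delta_0/2,\widetilde\ell)$, which lies outside $\overline{\Om'}$, by exactly the Hahn--Banach and unique-continuation argument you gesture at. The boundary terms you worry about never arise: the annihilating measure $\mu$ is supported in the interior compact set $S$, the potential $F(x)=\int G(\widetilde x,x)\,d\mu(\widetilde x)$ solves the Helmholtz equation in the connected set $\widetilde\Om\setminus S$ and vanishes on the slab near the top cap, so analyticity forces $F\equiv0$ off $S$ and $\mu$ annihilates the $G(\cdot,x_j)$ as well. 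The resulting finite combination $\psi$ then solves the Helmholtz equation on $D_\rho\times(-\widetilde\ell+\delta_0,\widetilde\ell-\delta_0)\supset\overline{\Om'}$ with the required lateral Neumann condition, and the rest of your argument goes through. Until you supply an argument of this kind, or complete one of your two sketched alternatives, the proof is incomplete.
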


The proof of this result is presented in Section~\ref{S.psi}.

\subsection*{Step~2: Setting up a fixed point argument}

Let us take a contractible axisymmetric domain $\Om_1\subset \RR^3$ with $C^\infty$~boundary such that $\Om\subset \Om_1\subset \Om'$ (e.g., think of adding two caps to the domain $\Om$). Note, in particular, that $\pd\Om_1\supset\pdl$. It is standard (see e.g.~\cite{Uh76}) that, by slightly changing the boundary of this domain away from~$\pdl$ if necessary, we can safely assume that $\ka^2$ is not a Neumann eigenvalue of the Laplacian $-\Delta$ in the domain~$\Om_1$.

The linearization of the Ginzburg--Landau equation at the trivial solution $(A,\Psi):=(0,0)$ is
\begin{align*}
(\ka^{-2}\Delta+1)\dot\Psi&=0\,,\\
\Delta \dot A-\nabla \Div \dot A&=0\,,
\end{align*}
where $(\dot A,\dot\Psi)$ is our unknown. Therefore, our argument will be constructed around a fixed solution
\begin{align*}
(\dot A,\dot\Psi):=(a,\psi)
\end{align*}
of this linear system, where $a$ is any divergence-free vector field in $\RR^3$ such that $\Delta a=0$ and $N\cdot a = 0$ on~$\pdl'$ (e.g., $a(x):=(0,0,1)$ or $a(x):=(x_2,-x_1,0)$), and where $\psi\in C^\infty(\overline{\Om'},\mathbb C)$ is a monochromatic wave as in Proposition~\ref{L.psi}.

Specifically, in the next step we shall set an iteration of the form
\begin{equation}\label{E.linearized}
	\begin{aligned}
(\ka^{-2}\Delta+1)\Psi_{k+1}&=H(\Psi_{k},A_{k})\,,\\
 \Delta A_{k+1}&=F(\Psi_{k},A_{k})\,,
\end{aligned}
\end{equation}
where $(A_0,\Psi_0)$ is constructed using $(a,\psi)$ and where
\begin{align*}
H(\Psi ,A )&:=2i\ka^{-1}A \cdot\nabla\Psi +|A |^2\Psi +|\Psi |^2\Psi +i\ka^{-1}(\Div A) \, \Psi \,,\\
F(\Psi ,A )&:=\frac{i}{2\ka}(\bar\Psi \nabla\Psi -\Psi \nabla\bar\Psi )+|\Psi |^2A \,.
\end{align*}
For future reference, note that the H\"older norm of these functions can obviously be estimated using Young's inequality for products as
\begin{multline}
	\|H(\Psi ,A )\|_{C^{r-2,\al}}+ \|F(\Psi ,A )\|_{C^{r-2,\al}}\\
	 \leq C\big(\|\Psi\|_{C^{r-1,\al}}^2+ \|\Psi\|_{C^{r-1,\al}}^3+ \|A\|_{C^{r-1,\al}}^2+\|A\|_{C^{r-1,\al}}^3\big)\,,\label{E.cotaFH}
\end{multline}
where the constant $C$ depends on $r,\alpha$ and $\ka^{-1}$.

\subsection*{Step 3: Reduction to a modified boundary problem}

For $\eta>0$ small enough and each $k\geq0$, we consider the following functions defined on $\Om_1$:
\begin{align*}
\Psi_{k}&:=\frac{\eta}{2C_1} \psi+\widetilde\Psi_{k}\,,\\
A_{k}&:=\frac{\eta}{2C_2} a+\widetilde A_{k}\,.
\end{align*}
Here $(\widetilde A_k,\widetilde\Psi_k)$ will be solutions to the system~\eqref{E.linearized}, with  $(\widetilde A_0,\widetilde\Psi_0)=(0,0)$, and the constants are chosen as $C_1:=\|\psi\|_{C^{r,\alpha}(\Om_1)}$ and $C_2:=\|a\|_{C^{r,\alpha}(\Om_1)}$.

As we shall see in Lemma~\ref{L.BC} below, key ingredient in the iteration is the somewhat non-standard  choice of boundary conditions. Specifically, $\widetilde\Psi_{k+1}$ is the solution to the boundary value problem for the Helmholtz operator
\begin{equation}\label{E.Psik}
(\ka^{-2}\Delta+1)\widetilde\Psi_{k+1}=H(\Psi_{k},A_{k}) \quad \text{ in } \Om_1\,,
\end{equation}
with Neumann condition
\[
N\cdot \nabla\widetilde\Psi_{k+1}=i\Psi_kA_k\cdot N\quad \text{ on }\pd\Om_1\,.
\]
By the Fredholm alternative, this solution exists and is unique because we assumed~$\ka^2$ is not a Neumann eigenvalue of $-\Delta$ in $\Om_1$.

Likewise, $\widetilde A_{k+1}$ is the solution to the boundary problem:
\begin{equation}\label{eq.deltaA}
\Delta \widetilde A_{k+1}=F(\Psi_{k},A_{k}) \quad \text{ in } \Om_1
\end{equation}
with relative boundary conditions:
\[
\widetilde A_{k+1}\times N=0\quad \text{ and }\quad \Div \widetilde A_{k+1}=0\quad \text{ on } \pd\Om_1\,.
\]
Since $\Om_1$ is a contractible domain (and hence the space of  harmonic forms satisfying the relative boundary conditions is trivial), it is well known that $\widetilde A_{k+1}$ exists and is unique~\cite[Lemma~3.5.6]{Schwarz}.

It is clear that
\begin{enumerate}
\item The couple $(A_{k+1},\Psi_{k+1})$ solves the equations of the iteration in $\Om_1$.
\item $N\cdot(\nabla\Psi_{k+1}-i\Psi_kA_k)=0$ on $\pd_L\Om$ for all $k\geq0$.
\item $\Div A_{k+1}=0$ on $\partial \Om_1$ for all $k\geq0$.
\end{enumerate}
Assuming that the sequence $(A_k,\Psi_k)$ converges to a couple $(A,\Psi)$ in the $C^{r,\alpha}(\Om_1)$ norm, it follows from the definition of the iteration that $(A,\Psi)$ solves the equation
\begin{equation}\label{eqs}
\begin{aligned}
(\ka^{-2}\Delta+1)\Psi&=H(\Psi,A)\,,\\
 \Delta A&=F(\Psi,A)\,,
\end{aligned}
\end{equation}
in $\Om_1$, with boundary conditions $N\cdot(\nabla \Psi-i\Psi A)=0$ on $\partial_L\Om$ and $\Div A=0$ on $\partial\Om_1$. By elliptic regularity, $(A,\Psi)$ is $C^\infty$.

The following lemma shows that, in fact, the vector field $A$ is divergence-free in~$\Om_1$. To prove this we use a property of the vector field $F(\Psi,A)$ when $(A,\Psi)$ is a solution to Equations~\eqref{eqs}, and the boundary condition $\Div A=0$ on $\partial\Om_1$. This crucially uses our choice of relative boundary conditions to solve Equation~\eqref{eq.deltaA}.

\begin{lemma}\label{L.BC}
The solution $(A,\Psi)$ satisfies $\Div A=0$ in $\Om_1$, so in particular it solves the Ginzburg--Landau equations in $\Om_1$.
\end{lemma}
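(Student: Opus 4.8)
The plan is to prove that the scalar function $u:=\Div A$ is harmonic in $\Om_1$ and vanishes on $\pd\Om_1$, hence vanishes identically; once $\Div A\equiv 0$, the term $\nabla\Div A$ drops out of the first line of the (small-amplitude) Ginzburg--Landau equations, so Equations~\eqref{eqs} become exactly the Ginzburg--Landau system. The only substantive ingredient is a gauge-compatibility identity for the nonlinearity, which I would isolate as follows.

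\emph{Claim: if $(A,\Psi)$ solves~\eqref{eqs}, then $\Div F(\Psi,A)\equiv 0$.} I would verify this by direct computation. The gradient cross-terms in $\Div\bigl(\bar\Psi\nabla\Psi-\Psi\nabla\bar\Psi\bigr)$ cancel, so
\[
\Div F(\Psi,A)=\frac{i}{2\ka}\bigl(\bar\Psi\Delta\Psi-\Psi\Delta\bar\Psi\bigr)+\Div\bigl(|\Psi|^2A\bigr)=-\frac{1}{\ka}\,\Imag\bigl(\bar\Psi\Delta\Psi\bigr)+\Div\bigl(|\Psi|^2A\bigr).
\]
Substituting $\Delta\Psi=\ka^2\bigl(H(\Psi,A)-\Psi\bigr)$ from the first equation of~\eqref{eqs} and discarding the terms of $\bar\Psi\bigl(H(\Psi,A)-\Psi\bigr)$ that are manifestly real (those coming from $|A|^2\Psi$, $|\Psi|^2\Psi$ and $-\Psi$, which contribute nothing to the imaginary part), I would use $\Real(\bar\Psi\nabla\Psi)=\half\nabla|\Psi|^2$ to obtain
\[
\Imag\bigl(\bar\Psi\Delta\Psi\bigr)=\ka^2\,\Imag\bigl(\bar\Psi H(\Psi,A)\bigr)=\ka\bigl(A\cdot\nabla|\Psi|^2+(\Div A)\,|\Psi|^2\bigr)=\ka\,\Div\bigl(|\Psi|^2A\bigr),
\]
where it is precisely the term $i\ka^{-1}(\Div A)\Psi$ in $H(\Psi,A)$ that produces the summand $(\Div A)|\Psi|^2$. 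Plugging this back cancels the two terms and gives $\Div F(\Psi,A)=0$.

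With the claim in hand the rest is routine. Since $(A,\Psi)$ is smooth by elliptic regularity and solves $\Delta A=F(\Psi,A)$ in $\Om_1$, the vector-calculus identity $\Delta\Div A=\Div\Delta A$ gives $\Delta u=\Div F(\Psi,A)=0$, so $u$ is harmonic in $\Om_1$. On the other hand, the relative boundary condition $\Div\widetilde A_{k+1}=0$ on $\pd\Om_1$ used to solve~\eqref{eq.deltaA} yields property~(iii) above, namely $\Div A_k=0$ on $\pd\Om_1$ for every $k$, and this passes to the $C^{r,\alpha}(\Om_1)$ limit, so $u=\Div A=0$ on $\pd\Om_1$. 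A harmonic function with vanishing Dirichlet data is identically zero (by the maximum principle, or by integrating by parts to get $\int_{\Om_1}|\nabla u|^2=0$), hence $\Div A\equiv 0$ in $\Om_1$, as desired. I expect the main obstacle to be concentrated entirely in the algebraic identity $\Div F(\Psi,A)=0$: the only delicate bookkeeping there is tracking the imaginary parts and the contribution of the $(\Div A)$-term of $H$, while everything surrounding it (elliptic regularity, passing the boundary condition to the limit, uniqueness for the Dirichlet Laplacian) is standard.
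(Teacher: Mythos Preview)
Your proof is correct and follows essentially the same route as the paper: both establish the identity $\Div F(\Psi,A)=0$ by substituting $\Delta\Psi$ from the first equation of~\eqref{eqs}, then conclude that $\Div A$ is harmonic with zero Dirichlet data on $\pd\Om_1$ and hence vanishes. Your bookkeeping via real and imaginary parts is slightly more explicit than the paper's direct subtraction, but the argument is the same.
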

\begin{proof}
An easy computation shows that the couple $(A,\Psi)$ satisfies $\Div F(A,\Psi)=0$. Indeed,
\begin{align*}
&\Div F(A,\Psi)=\frac{i}{2\ka}(\bar\Psi\Delta\Psi-\Psi\Delta\bar\Psi)+A\cdot\nabla|\Psi|^2+|\Psi|^2\Div A\\
&=\frac{i}{2\ka}(2i\ka\bar\Psi A\cdot\nabla\Psi+2i\ka\Psi A\cdot\nabla\bar\Psi+2i\ka|\Psi|^2\Div A)+A\cdot\nabla|\Psi|^2+|\Psi|^2\Div A\\
&=0\,,
\end{align*}
where we have used the first equation that satisfies the solution $(A,\Psi)$ to pass to the second line. Accordingly, taking $\Div$ in the second equation that satisfies the couple $(A,\Psi)$, we get
\[
\Delta(\Div A)=0 \text{ in } \Om_1\,,\qquad \Div A=0 \text{ on } \partial\Om_1\,,
\]
 and hence $\Div A=0$ on $\Om_1$ as claimed. Finally, notice that this implies that the couple $(A,\Psi)$ satisfies the Ginzburg--Landau equations in the domain $\Om_1$.
\end{proof}

\subsection*{Step 4: Convergence of the scheme}

Armed with these estimates, one can now show the convergence of our iteration scheme by a standard induction argument. In particular, we show below that
\begin{align*}
\Psi_{k+1} = \frac{\eta}{2C_1}\psi +  O(\eta^2)\,,\\
A_{k+1}=\frac{\eta}{2C_2}a+ O(\eta^2)\,,
\end{align*}
where we use the notation $q= O(\eta^2)$ for terms bounded as $\|q\|_{C^{r,\alpha}(\Om_1)}\leq C\eta^2$.

To prove the convergence of the iteration, let us make the induction hypothesis that
$$\|A_{k'}\|_{C^{r,\alpha}(\Om_1)}<\eta\,,\qquad \|\Psi_{k'}\|_{C^{r,\alpha}(\Om_1)}<\eta$$
for all $k'\leq k$, where $\eta$ is a small enough constant. We shall next show that the same bounds hold for $A_{k+1}$ and $\Psi_{k+1}$.

Indeed, by definition
\[
\|A_{k+1}\|_{C^{r,\alpha}(\Om_1)}\leq \frac{\eta}{2}+\|\widetilde A_{k+1}\|_{C^{r,\alpha}(\Om_1)}\,,
\]
where $\widetilde A_{k+1}$ is the unique solution to the Hodge boundary problem~\eqref{eq.deltaA}. Standard Schauder estimates yield~\cite[Lemma~3.5.6]{Schwarz}
\[
\|\widetilde A_{k+1}\|_{C^{r,\alpha}(\Om_1)}\leq C\|F(\Psi_k,A_k)\|_{C^{r-2,\alpha}(\Om_1)}\,,
\]
where $C>0$ is a constant that depends on $r,\alpha,\Om_1$ but not on $k$. By the estimate~\eqref{E.cotaFH}, we then get
\begin{align*}
\|\widetilde A_{k+1}\|_{C^{r,\alpha}(\Om_1)}&\leq C\big(\|\Psi_k\|_{C^{r-1,\al}}^2+ \|\Psi_k\|_{C^{r-1,\al}}^3+ \|A_k\|_{C^{r-1,\al}}^2+\|A_k\|_{C^{r-1,\al}}^3\big)\\
&\leq C\eta^2\,.
\end{align*}
Therefore,
\[
\|A_{k+1}\|_{C^{r,\alpha}(\Om_1)}\leq \frac{\eta}{2}+C\eta^2<\eta\,,
\]
provided that $\eta$ is small enough.

Analogously, we can estimate
\[
\|\Psi_{k+1}\|_{C^{r,\alpha}(\Om_1)}\leq \frac{\eta}{2}+\|\widetilde\Psi_{k+1}\|_{C^{r,\alpha}(\Om_1)}\,,
\]
where $\widetilde\Psi_{k+1}$ is the unique solution to the boundary problem~\eqref{E.Psik}.
Again, by the Fredholm alternative, since $\ka^2$ is not a Neumann eigenvalue of $-\Delta$ in $\Om_1$, there exists a unique solution to this boundary problem, which by standard Schauder estimates is bounded as~\cite[Section~10.5]{Horman}
\[
\|\widetilde \Psi_{k+1}\|_{C^{r,\alpha}(\Om_1)}\leq C\big(\|H(\Psi_k,A_k)\|_{C^{r-2,\alpha}(\Om_1)}+\|\Psi_kA_k\|_{C^{r-1,\alpha}(\Om_1)}\big)\,,
\]
with $C$ a constant that depends on $r,\alpha,\Om_1$ but not on $k$. Together with~\eqref{E.cotaFH}, this results in
\begin{align*}
\|\widetilde \Psi_{k+1}\|_{C^{r,\alpha}(\Om_1)}&\leq C\big(\|\Psi_k\|_{C^{r-1,\al}}^2+ \|\Psi_k\|_{C^{r-1,\al}}^3+ \|A_k\|_{C^{r-1,\al}}^2+\|A_k\|_{C^{r-1,\al}}^3\big)\\
&\leq C\eta^2\,.
\end{align*}
Accordingly,
\[
\|\Psi_{k+1}\|_{C^{r,\alpha}(\Om_1)}\leq \frac{\eta}{2}+C\eta^2<\eta\,,
\]
provided that $\eta$ is small enough.

Since $(\widetilde A_0,\widetilde \Psi_0)=(0,0)$, we then infer that
\begin{equation}\label{eq.eta}
\|\Psi_{k}\|_{C^{r,\alpha}(\Om_1)}<\eta\,,\qquad \|A_{k}\|_{C^{r,\alpha}(\Om_1)}<\eta
\end{equation}
for all $k\geq 0$, and
\begin{equation}\label{eq.eta2}
\|\widetilde\Psi_{k}\|_{C^{r,\alpha}(\Om_1)}<C\eta^2\,,\qquad \|\widetilde A_{k}\|_{C^{r,\alpha}(\Om_1)}<C\eta^2
\end{equation}
for all $k\geq 0$ and some $k$-independent constant $C>0$ that does not depend on $\eta$.

Finally, to show that the sequence $(A,\Psi)$ is Cauchy, we notice that
$$
A_{k+1}-A_k=\widetilde A_{k+1}-\widetilde A_k\,,\qquad \Psi_{k+1}-\Psi_k=\widetilde\Psi_{k+1}-\widetilde\Psi_k\,,
$$
and these differences satisfy the boundary problems
\[
\Delta (\widetilde A_{k+1}-\widetilde A_k)=F(\Psi_{k},A_{k})-F(\Psi_{k-1},A_{k-1}) \text{ in } \Om_1
\]
with relative boundary conditions on $\pd\Om_1$, and
\[
(\ka^{-2}\Delta+1)(\widetilde\Psi_{k+1}-\widetilde\Psi_k)=H(\Psi_k,A_k)-H(\Psi_{k-1},A_{k-1}) \text{ in } \Om_1
\]
with Neumann boundary condition
$$
N\cdot \nabla (\widetilde \Psi_{k+1}-\widetilde\Psi_k)=iN\cdot(\Psi_kA_k-\Psi_{k-1}A_{k-1}) \quad \text{ on } \quad \partial\Om_{1}.
$$

For any $A_k,\Psi_k$ bounded as $\|A_k\|_{C^{r-1}(\Om_1)}+\|\Psi_k\|_{C^{r-1}(\Om_1)}<2\eta$,  thanks to the mean value theorem, one can argue as in the case of~\eqref{E.cotaFH} to conclude that
\begin{multline*}
	\|F(\Psi_{k+1},A_{k+1})-F(\Psi_{k},A_{k})\|_{C^{r-2,\alpha}(\Om_1)} + 	\|H(\Psi_{k+1},A_{k+1})-H(\Psi_{k},A_{k})\|_{C^{r-2,\alpha}(\Om_1)}\\
\leq C\eta\Big(\|A_{k}-A_{k-1}\|_{C^{r-1,\alpha}(\Om_1)}+\|\Psi_{k}-\Psi_{k-1}\|_{C^{r-1,\alpha}(\Om_1)}\Big)\,,
\end{multline*}
where the constant is independent of~$\eta<1$ and~$k$. Therefore, using the bound~\eqref{eq.eta} and Schauder estimates as before, we conclude that
\begin{multline*}
\|A_{k+1}-A_k\|_{C^{r,\alpha}(\Om_1)}+\|\Psi_{k+1}-\Psi_k\|_{C^{r,\alpha}(\Om_1)}\\
\leq C\eta \Big(\|A_{k}-A_{k-1}\|_{C^{r,\alpha}(\Om_1)}+\|\Psi_{k}-\Psi_{k-1}\|_{C^{r,\alpha}(\Om_1)}\Big)\,,
\end{multline*}
which yields convergence in the $C^{r,\alpha}$-norm provided that $\eta$ is chosen small enough so that $C\eta <\frac14$. This completes the proof that the sequence $(A_k,\Psi_k)$ converges to a couple $(A,\Psi)$ in $C^{r,\alpha}(\Om_1)$.

\subsection*{Step 5: Completion of the proof}
In the previous steps we have constructed a $C^\infty$ solution $(A,\Psi)$ to the Ginzburg--Landau equations in $\Om_1\supset\Om$ that satisfies the boundary condition
\[
N\cdot (\nabla\Psi-i\Psi A)=0
\]
on $\partial_L\Om$. Moreover, using the bound~\eqref{eq.eta2}, we conclude that it satisfies the estimate
\begin{equation*}
\|2C_1\eta^{-1}\Psi-\psi\|_{C^{r,\alpha}(\Om_1 )}+\|2C_2\eta^{-1}A-a\|_{C^{r,\alpha}(\Om_1)}\leq C\eta\,.
\end{equation*}
By construction, cf. Proposition~\ref{L.psi}, since $\Om_1\subset\Om'$, the function $\psi|_{\Om_1}$ has a subset of structurally stable vortex lines that is isotopic to $L'\cap\Om_1$, the diffeomorphism being $\varepsilon$-close to the identity in the $C^r$-norm. In particular, $\psi|_{\Om}$ has a subset of structurally stable vortex lines isotopic to $L=L'\cap\Om$. This fact and the previous estimate imply that $2C_1\eta^{-1}\Psi|_{\Om}$, and hence $\Psi|_{\Om}$, has a subset of vortex lines isotopic to $L$. Specifically, there is a smooth diffeomorphism $\Phi:\overline\Om\to\overline\Om$ which is $(\varepsilon+\eta)$-close to the identity in the $C^r$-norm, and such that $\Phi(L)$ is a subset of vortex lines of $\Psi|_{\Om}$. Since $\varepsilon+\eta$ can be taken as small as desired, this completes the proof of the theorem.
\begin{remark}
The solution $A$ is a perturbation of the vector field $a$ in $\Om$. Since $a$ is fairly explicit, this provides a good understanding of the orbits of $A$ in the perturbative regime, but this does not seem to be relevant for applications.
\end{remark}

\section{Proof of Proposition~\ref{L.psi}}\label{S.psi}

It is convenient to take a larger cylindrical domain $\widetilde\Om=D_\rho\times (-\widetilde\ell,\widetilde\ell)$, with $\widetilde\ell\in (2\ell,3\ell)$, and a braid $\widetilde L\subset\widetilde\Om$ which is an ``extension'' of the braid $L'$, i.e.,
$$\widetilde L\cap\Om'=L'\,.$$
An easy application of Whitney's approximation theorem ensures that, by perturbing the braid $\widetilde L$ slightly if necessary (see e.g.~\cite[Section 2.5]{Hirsch}), we can assume that it is a real analytic submanifold of $\widetilde\Om$. Let us denote by $\{\widetilde L_k\}_{k=1}^n$ the connected components of $\widetilde L$. Each component
$\widetilde L_k$ is an analytic open curve without self-intersections, and we claim that we can write the curve $\widetilde L_k$ as the transverse intersection
of two surfaces $\Si_k^1$ and $\Si_k^2$.

Indeed, each curve $\widetilde L_k$ is contractible, so it has
trivial normal bundle. Then there exists an analytic
submersion $\Theta_k:W_k\to\RR^2$, where $W_k\subset\widetilde\Om$ is a tubular
neighborhood of $\widetilde L_k$ and $\Theta_k^{-1}(0)=\widetilde L_k$. We can then take the
analytic surfaces
$$\Si^1_k:=\Theta_k^{-1}((-1,1)\times\{0\})\subset W_k\,, \qquad \Si^2_k:=\Theta_k^{-1}(\{0\}\times(-1,1))\subset W_k\,.$$
Since $\Theta_k$ is a submersion, these surfaces intersect transversally at
$$\widetilde L_k=\Si^1_k\cap\Si^2_k\,.$$

Now that we have expressed the component $\widetilde L_k$ as the intersection
of two real analytic surfaces $\Si^1_{k}$ and $\Si^2_k$, we can
consider the following Cauchy problems, with $m=1,2$:
\begin{equation*}
\kappa^{-2}\Delta u_k^m+u_k^m=0\,,\qquad u_k^m|_{\Si^m_k}=0\,,\qquad \pd_N u_k^m|_{\Si^m_k}=1\,.
\end{equation*}
Here $\pd_N$ denotes a normal derivative at the corresponding
surface. The Cauchy--Kovalevskaya theorem then grants the existence of
solutions $u_k^m$ to this Cauchy problem in the closure of small
neighborhoods $U^m_{k}\subset\widetilde\Om$ of each surface $\Si^m_{k}$. We can safely
assume that the tubular neighborhoods $U^1_k\cap U^2_k$ are small
enough so that the neighborhoods corresponding to distinct components
are disjoint. Now we take the union of these pairwise disjoint tubular neighborhoods,
\[
U:=\bigcup_{k}(U^1_k\cap U^2_k)\subset\widetilde\Om\,,
\]
and define a
complex-valued function $\varphi$ on the set $U$ as
$$\varphi|_{U^1_{k}\cap U^2_k}:=u_k^1+iu_k^2\,.$$

The following properties of $\varphi$ are clear from the construction:
\begin{enumerate}
\item $\varphi$ satisfies the equation
$$\kappa^{-2}\Delta\varphi+\varphi=0$$
in the tubular neighborhood $U$ of the braid $\widetilde L$.
\item $U$ can be taken small enough so that the nodal set of $\varphi$ is precisely $\widetilde L$, i.e., $\widetilde L=\varphi^{-1}(0)$.
\item The intersection of the zero sets of the real and imaginary parts of $\varphi$ on $\widetilde L$ is transverse, i.e.,
\begin{equation}\label{trans}
\rank(\nabla \Real \varphi(x),\nabla \Imag \varphi(x))=2
\end{equation}
for all $x\in \widetilde L$.
\end{enumerate}

Denote by $S$ a compact subset of $U$ whose interior
contains $\overline {L'}$ and $\widetilde\Om\backslash S$ is connected. Our next goal is to construct a solution of the
Helmholtz equation in~$\Om'$ that approximates the local solution $\varphi$ in the set $S\cap \Om'$. To this end, let us take a smooth function $\chi:\RR^3\to\RR$ equal to $1$ in a neighborhood of $S$ and identically zero outside $U\cap\Big(D_\rho\times(-\widetilde\ell+\delta_0,\widetilde\ell -\delta_0)\Big)$, with $\delta_0>0$ a small but fixed constant so that $\widetilde\ell-\delta_0>2\ell$ and $S\subset D_\rho\times(-\widetilde\ell+\delta_0,\widetilde\ell -\delta_0)$. We then define
a smooth extension~$\vp_0$ of the function $\varphi$ to $\widetilde\Om$ by
setting
$$\vp_0:=\chi \varphi\,.$$
By construction, this function is compactly supported in $\widetilde\Om$ and $\varphi_0=\varphi$ in a neighborhood of $S$.

Let us now write the function $\vp_0$ as an integral involving the Neumann Green's function of the domain $\widetilde\Om$. We first observe that, by the monotonicity principle, for any fixed $\ka>0$ there is a constant $\widetilde\ell\in (2\ell,3\ell)$ such that $\ka^2$ is not a Neumann eigenvalue of $-\Delta$ in $\widetilde\Om$. In what follows we fix such a constant $\widetilde\ell$ (which depends on $\ka$). The next lemma on the existence of a Neumann Green's function is elementary, but we provide a proof for the sake of completeness:

\begin{lemma}\label{L.Green}
Let $\diag$ denote the diagonal of $\widetilde\Om\times\widetilde\Om$. For some $\widetilde\ell\in (2\ell,3\ell)$, there exists a distribution $G\in C^\infty(\widetilde\Om\times\widetilde\Om)\backslash \diag$ with the following properties:
\begin{enumerate}
\item $G$ is symmetric: $G(x,y)=G(y,x)$.
	\item $(\ka^{-2}\De_x +1) G( x,y)=\de(x-y)$.
	\item $N(x)\cdot \nabla_x G(x,\cdot)=0$ for all $x\in \pd\widetilde\Om$.
\item $|G(x,y)|\leq \frac{C}{|x-y|}$ for some constant $C>0$ and any $(x,y)\in (\widetilde\Om\times\widetilde\Om)\backslash \diag$
\end{enumerate}
\end{lemma}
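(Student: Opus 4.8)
The plan is to realize $G$ as the integral kernel of the resolvent of the Neumann Laplacian and to extract the pointwise bound from a resolvent identity. First I would observe that, in view of the application, nothing is lost by taking $\widetilde\Om$ to have smooth boundary: one rounds the two edges $\pd D_\rho\times\{\pm\widetilde\ell\}$ inside the thin slabs $|x_3|>\widetilde\ell-\tfrac{\de_0}{2}$, which leaves untouched the piece $\pd D_\rho\times(-2\ell,2\ell)$ of the lateral boundary (the only one used later), and there is still enough freedom in $\widetilde\ell$ and in the rounding that, by monotonicity of Neumann eigenvalues, $\ka^2$ is not a Neumann eigenvalue of $-\De$ on $\widetilde\Om$. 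Then $-\De_N$ is self-adjoint with compact resolvent, $\ka^2\notin\spec(-\De_N)$, and $(-\De_N-\ka^2)^{-1}$ is bounded on $L^2(\widetilde\Om)$ and, by elliptic regularity on the smooth domain, maps $L^2\to H^2$; dualizing, it also maps $H^{-2}\to L^2$. Since $H^2(\widetilde\Om)\hookrightarrow C^0(\overline{\widetilde\Om})$ in dimension three, the evaluation $\de_y$ lies in $H^{-2}$, so for each $y\in\widetilde\Om$ one may define
\[
G(\cdot,y):=-\ka^2\,(-\De_N-\ka^2)^{-1}\de_y\in L^2(\widetilde\Om)\,.
\]
By construction $G(\cdot,y)$ is the weak Neumann solution of $\ka^{-2}\De_xG(\cdot,y)+G(\cdot,y)=\de_y$, which is (ii) and (iii); property (i) is the symmetry of the kernel of the self-adjoint operator $(-\De_N-\ka^2)^{-1}$; and off the pole $G(\cdot,y)$ weakly solves the homogeneous Helmholtz equation with vanishing Neumann data, so by elliptic regularity up to the smooth boundary $G\in C^\infty\big((\overline{\widetilde\Om}\times\overline{\widetilde\Om})\setminus\diag\big)$. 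Equivalently one may write $G=E+h$, with $E(x,y)=-\tfrac{\ka^2}{4\pi}\cos(\ka|x-y|)/|x-y|$ the free fundamental solution and $h(\cdot,y)$ the correction solving the Neumann problem with boundary data $-N\cdot\nabla_xE(\cdot,y)$, which exists and is unique by the Fredholm alternative because the homogeneous Neumann problem has only the zero solution.

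For the pointwise bound (iv) I would use the resolvent identity
\[
(-\De_N-\ka^2)^{-1}=(-\De_N+1)^{-1}+(\ka^2+1)\,(-\De_N+1)^{-1}(-\De_N-\ka^2)^{-1}\,.
\]
The kernel $G_1$ of $(-\De_N+1)^{-1}$ is the Neumann Green's function of $-\De+1$; writing $G_1(x,y)=\int_0^\infty e^{-t}p_t(x,y)\,dt$ with $p_t$ the Neumann heat kernel, the positivity of $p_t$ and the Gaussian upper bound $p_t(x,y)\le Ct^{-3/2}e^{-c|x-y|^2/t}$ on the smooth bounded domain give $0\le G_1(x,y)\le C/|x-y|$, and in particular $\sup_y\|G_1(\cdot,y)\|_{L^2(\widetilde\Om)}<\infty$. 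The second term has kernel $R(\cdot,y)=(\ka^2+1)(-\De_N+1)^{-1}\big[(-\De_N-\ka^2)^{-1}\de_y\big]$, and since $(-\De_N-\ka^2)^{-1}\de_y=-\ka^{-2}G(\cdot,y)$ has $L^2$-norm bounded uniformly in $y$ (because $\de_y$ is bounded in $H^{-2}$ by the Sobolev embedding), it follows that $R(\cdot,y)\in H^2(\widetilde\Om)$ with $\|R(\cdot,y)\|_{H^2}\le C$, hence $\|R(\cdot,y)\|_{C^0}\le C$ uniformly in $y$. Therefore $|G(x,y)|=\ka^2|G_1(x,y)+R(x,y)|\le C/|x-y|+C\le C'/|x-y|$, the last step using $|x-y|\le\diam\widetilde\Om$; this is (iv).

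The step I expect to be the main obstacle is precisely (iv): controlling $G$ uniformly up to the boundary, and in particular when the pole $y$ approaches $\pd\widetilde\Om$. This is where the reduction to a smooth domain pays off, both for the $H^2$-regularity of the resolvent and for the Gaussian heat-kernel bound; if one insisted on keeping the cylinder with its edges, the same estimate could still be obtained, but only after the more laborious argument of replacing the free parametrix $E$ near $\pd\widetilde\Om$ by a boundary-adapted one, built by reflecting the pole across the (locally nearly flat) lateral and horizontal faces so that the leading singularity of its normal derivative cancels. Everything else — existence, uniqueness, symmetry, interior smoothness — is routine once the operator-theoretic setup is in place.
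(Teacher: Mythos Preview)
Your argument is correct but proceeds quite differently from the paper's. The paper's proof is very brief: existence and uniqueness of $G$ are read off from the Fredholm alternative (since $\ka^2$ is not a Neumann eigenvalue), symmetry (i) is verified directly by applying Green's second identity to $G(\cdot,y)$ and $G(\cdot,z)$ and using the Neumann condition, and the pointwise bound (iv) is simply attributed to the classical result of Gilbarg and Serrin on isolated singularities of solutions to second-order elliptic equations; no smoothing of the domain, no resolvent identity, and no heat kernel appear. Your route---realizing $G$ as the Schwartz kernel of the resolvent of the self-adjoint Neumann Laplacian, obtaining symmetry from self-adjointness, and deducing (iv) from a resolvent identity together with Gaussian heat-kernel bounds---is more operator-theoretic and more self-contained for the uniform estimate up to the boundary, at the price of rounding the cylinder (which, as you rightly observe, is harmless for the application since only the lateral boundary in $|x_3|<2\ell$ is ever used afterwards). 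In fact the paper's citation to Gilbarg--Serrin really only controls the blow-up as $x\to y$ for $y$ fixed in the interior; the uniformity of the constant as $y$ approaches $\pd\widetilde\Om$ is not addressed there, though it is not actually needed later either, since the pole always ranges over the compact interior set $S$. One small wording issue: ``monotonicity of Neumann eigenvalues'' is not quite right (Neumann eigenvalues are not domain-monotone); what you and the paper both need is just that the eigenvalues vary continuously with $\widetilde\ell$ and are discrete, so a generic $\widetilde\ell$ avoids $\ka^2$.
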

\begin{proof}
Since $\ka^2$ is not a Neumann eigenvalue of $-\Delta$ in $\widetilde\Om$, it follows from Fredholm's alternative that there exists a unique Green's function $G\in C^\infty(\widetilde\Om\times\widetilde\Om)\backslash \diag$ satisfying
\[
(\ka^{-2}\De_x +1) G( x,y)=\de(x-y)
\]
in~$\widetilde\Om$ and the Neumann condition $N(x)\cdot \nabla_x G(x,\cdot)=0$ for all $x\in \pd\widetilde\Om$. We claim that $G$ is symmetric. Indeed, using Green's second identity it follows that
\begin{align*}
G(z,y)-G(y,z)&=\int_{\pd\widetilde\Om}\big(G(x,y)\nabla_x G(x,z)\cdot N(x)-G(x,z)\nabla_x G(x,y)\cdot N(x)\big)d\sigma(x)\\
&=0\,,
\end{align*}
where we have used that $G$ satisfies the Neumann condition on $\partial\widetilde\Om$. Finally, the estimate in the item (iv) follows from a general property of singularities of solutions to second order elliptics PDEs, cf.~\cite{GT}. This completes the proof of the lemma.
\end{proof}

Setting $\rho:= \ka^{-2}\Delta \varphi_0+\varphi_0$, it follows from the fact that $\ka^2$ is not a Neumann eigenvalue of $-\Delta$ in $\widetilde\Om$ that
\begin{equation}\label{intbv}
\vp_0(x)=\int_{\widetilde\Om} G(x,y)\, \rho(y)\, dy\,.
\end{equation}
The compact support of the complex-valued function $\rho$ is contained in the open set~$\widetilde\Om\backslash S$, and its distance to the caps $D_\rho\times\{-\widetilde\ell\}$ and $D_\rho\times \{\widetilde \ell\}$ is at least $\delta_0>0$. Next, observe that an easy continuity argument ensures that one
can approximate the integral~\eqref{intbv} uniformly in the compact set~$S$ by a finite
Riemann sum of the form
\begin{equation}\label{vp1}
\vp_1(x):=\sum_{j=-J}^J \rho_j\, G(x,x_j)\,.
\end{equation}
Specifically, for any $\de>0$ there is a large integer $J$, complex
numbers $\rho_j$ and points $x_j\in \text{supp } \rho\subset \widetilde\Om\backslash S$ such that the
finite sum~\eqref{vp1} satisfies
\begin{equation}\label{est1}
\|\varphi_1-\varphi_0\|_{C^0(S)}<\de\,.
\end{equation}

In the following lemma we show how to ``sweep'' the singularities of
the function~$\varphi_1$ in order to approximate it in the set $S$ by another function $\varphi_2$ whose singularities are contained in $D_\rho\times (\widetilde\ell-\frac{\delta_0}{2},\widetilde\ell)$. The proof is based on a duality argument and the
Hahn--Banach theorem, and is an adaptation to our setting of the proof of~\cite[Lemma 4.1]{jems18}.

\begin{lemma}\label{L.approx}
For any $\de>0$, there is a finite set of points
$\{z_j\}_{j=-J'}^{J'}$ in $D_\rho\times (\widetilde\ell-\frac{\delta_0}{2},\widetilde\ell)\subset\widetilde\Om$ and complex numbers $c_j$ such that the
finite linear combination
\begin{equation}\label{eqmwx}
\psi(x):=\sum_{j=-J'}^{J'} c_j\, G(x,z_j)
\end{equation}
approximates the function $\varphi_1$ uniformly in $S$:
\begin{equation}\label{GtG}
\|\psi-\varphi_1\|_{C^0(S)}<\de\,.
\end{equation}
\end{lemma}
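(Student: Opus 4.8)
The plan is to use a Hahn--Banach/duality argument to "sweep" the singularities $\{x_j\}$ of $\vp_1$ (which lie in $\supp\rho\subset\widetilde\Om\backslash S$, at distance at least $\delta_0$ from the top cap) up into the thin slab $D_\rho\times(\widetilde\ell-\tfrac{\delta_0}{2},\widetilde\ell)$. Concretely, let $\cW$ denote the closure of the complex linear span of the functions $\{G(\cdot,z):z\in D_\rho\times(\widetilde\ell-\tfrac{\delta_0}{2},\widetilde\ell)\}$ inside $C^0(S,\CC)$. It suffices to show that $\vp_1|_S\in\cW$; indeed, since $\vp_1$ is a finite combination $\sum_{j}\rho_j G(\cdot,x_j)$, it suffices by linearity to show that $G(\cdot,x_j)|_S\in\cW$ for each fixed singularity point $x_j$. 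So I reduce the lemma to: for any fixed $p\in\widetilde\Om\backslash S$ with $\dist(p,D_\rho\times\{\widetilde\ell\})\geq\delta_0$, the function $G(\cdot,p)|_S$ lies in $\cW$.

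To prove this membership I argue by contradiction using Hahn--Banach. If $G(\cdot,p)|_S\notin\cW$, then by Hahn--Banach and the Riesz representation theorem there is a (complex) Radon measure $\mu$ supported on $S$ such that $\int_S G(x,z)\,d\mu(x)=0$ for all $z\in D_\rho\times(\widetilde\ell-\tfrac{\delta_0}{2},\widetilde\ell)$, while $\int_S G(x,p)\,d\mu(x)\neq0$. Now define the potential $v(z):=\int_S G(x,z)\,d\mu(x)$ for $z\in\widetilde\Om\backslash S$. By symmetry of $G$ (Lemma~\ref{L.Green}(i)) and the fact that $(\ka^{-2}\De_z+1)G(x,z)=\de(x-z)$ with the Neumann condition in the $z$-variable, the function $v$ satisfies $(\ka^{-2}\De+1)v=0$ in $\widetilde\Om\backslash S$ and the homogeneous Neumann condition on $\pd\widetilde\Om$; it is moreover real-analytic there (away from $\diag$, $G$ is smooth, indeed one has analytic hypoellipticity for this elliptic operator with analytic coefficients). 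The vanishing on the open slab $D_\rho\times(\widetilde\ell-\tfrac{\delta_0}{2},\widetilde\ell)$ forces, by analytic continuation, $v\equiv0$ on the connected component of $\widetilde\Om\backslash S$ containing that slab. Since $S$ was chosen precisely so that $\widetilde\Om\backslash S$ is connected, this component is all of $\widetilde\Om\backslash S$, and in particular $v(p)=\int_S G(x,p)\,d\mu(x)=0$, contradicting the choice of $\mu$. Hence $G(\cdot,p)|_S\in\cW$, and summing over the $x_j$ gives $\vp_1|_S\in\cW$, which is exactly the statement~\eqref{GtG}: there is a finite linear combination $\psi=\sum_j c_j G(\cdot,z_j)$ with $z_j$ in the slab and $\|\psi-\vp_1\|_{C^0(S)}<\de$.

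The main obstacle is the unique continuation step: one must make sure the potential $v$ genuinely propagates the zero from the open slab to all of $\widetilde\Om\backslash S$. This rests on two points that need to be handled carefully: first, that $\widetilde\Om\backslash S$ is connected (which is why $S$ was built with that property, and why we extended to the larger cylinder $\widetilde\Om$ with the caps removed from $\supp\rho$); second, that $v$ is real-analytic on $\widetilde\Om\backslash S$ so that vanishing on an open subset implies vanishing on the whole connected set. The latter follows because the coefficients of $\ka^{-2}\De+1$ are analytic, so solutions are analytic in the interior by elliptic analytic regularity; one has to note that $S$ is compact in $\widetilde\Om$ so $p$ and the slab lie in the open set $\widetilde\Om\backslash S$ where this applies, staying away from the diagonal. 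A minor additional point is to check that the slab $D_\rho\times(\widetilde\ell-\tfrac{\delta_0}{2},\widetilde\ell)$ indeed lies in $\widetilde\Om\backslash S$, which holds because $S\subset D_\rho\times(-\widetilde\ell+\delta_0,\widetilde\ell-\delta_0)$ by construction of $\chi$. Everything else — the Hahn--Banach separation, the Riesz representation of the separating functional as a measure on the compact set $S$, and the use of symmetry of $G$ to reinterpret $v$ as a genuine solution of the Helmholtz--Neumann problem in the $z$-variable — is routine.
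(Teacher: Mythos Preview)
Your proposal is correct and follows essentially the same Hahn--Banach/duality argument as the paper: define the potential of a measure on $S$ via the symmetric Green's function, note it solves the Helmholtz equation and is analytic on the connected set $\widetilde\Om\backslash S$, and propagate the vanishing from the slab to the points $x_j$. The only cosmetic differences are that you reduce first to individual fundamental solutions $G(\cdot,x_j)$ and phrase the separation step by contradiction, whereas the paper shows directly that any annihilating measure kills $\vp_1$; the substance is identical.
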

\begin{proof}
Consider the space $\cU$ of all complex-valued functions on $\widetilde\Om$ that are finite linear
combinations of the form~\eqref{eqmwx}, where the points
$z_j$ range over $D_\rho\times (\widetilde\ell-\frac{\delta_0}{2},\widetilde\ell)$ and where the
constants $c_j$ may take arbitrary complex values. Restricting these functions to the set $S$, $\cU$ can
be regarded as a subspace of the Banach space $C^0(S)$ of continuous
complex-valued functions on $S$.

By the Riesz--Markov theorem, the dual of $C^0(S)$ is  the space
$\cM(S)$ of the finite complex-valued Borel measures on $\widetilde\Om$ whose support
is contained in the set~$S$. Let us take any measure
$\mu\in\cM(S)$ such that $\int_{\widetilde\Om} f\, d\mu=0$ for all
$f\in \cU$. Using that the Green's function $G$ is symmetric, cf. item (i) in Lemma~\ref{L.Green}, we now define a complex-valued function $F$ as
\[
F(x):=\int_{\widetilde\Om} G(\widetilde x, x)\,d\mu(\widetilde x)\,,
\]
which is in $L^1(\widetilde\Om)$ by the estimate (iv) in Lemma~\ref{L.Green}.
Furthermore, it is clear that $F$ satisfies the equation
\[
\ka^{-2}\De F+F=\mu
\]
distributionally in~$\widetilde\Om$: for any $\phi\in C^2_c(\widetilde\Om)$, one has
\[
\int_{\widetilde\Om} F\, (\ka^{-2}\De+1)\phi\, dx =\int_{\widetilde\Om} \phi\, d\mu\,.
\]

Notice that $F$ is identically zero on $D_\rho\times (\widetilde\ell-\frac{\delta_0}{2},\widetilde\ell)$ by the definition of the
measure~$\mu$ and that $F$
satisfies the elliptic
equation
\[
\De F+F=0
\]
in $\widetilde\Om\minus S$, so $F$ is analytic in this set. Hence, since
$\widetilde\Om\minus S$ is connected and contains the set $D_\rho\times (\widetilde\ell-\frac{\delta_0}{2},\widetilde\ell)$, by
analyticity the function $F$ must vanish on the complement of~$S$. It
then follows that the measure $\mu$ also annihilates any complex-valued function of the
form $G(x,x_j)$
because, as the points $x_j$ do not belong to $S$,
\[
0=F(x_j)=\int_{\widetilde\Om} G(x,x_j)\,d\mu(x)\,.
\]
Therefore,
\[
\int_{\RR^3}\varphi_1\,d\mu= \sum_{j=-J}^J \rho_j \int_{\widetilde\Om} G(x,x_j)\,d\mu(x)=0\,,
\]
which implies that $\varphi_1$ can be
uniformly approximated on~$S$ by elements of the subspace $\cU$ as a
consequence of the Hahn--Banach theorem. This completes the proof of the lemma.
\end{proof}

To complete the proof of Proposition~\ref{L.psi}, we observe that
\begin{equation}\label{eqwmi}
\ka^{-2}\De \psi+\psi=0
\end{equation}
in the set $D_\rho\times (-\widetilde\ell+\delta_0,\widetilde\ell-\delta_0)$, which contains $S$ and $\Om'$. Moreover, it satisfies the Neumann boundary condition $N\cdot \nabla\psi=0$ on the lateral boundary $\pd D_\rho\times (-\widetilde\ell+\delta_0,\widetilde\ell-\delta_0)\supset \partial_L\Om'$, and the estimate
\[
\|\psi-\vp\|_{C^0(S)}<2\delta\,,
\]
where we have used the bounds~\eqref{est1} and~\eqref{GtG} and that $\vp_0=\vp$ on a neighborhood of $S$. In fact, since $\vp$ also satisfies the Helmholtz equation in a
neighborhood of
the compact set $S$, standard elliptic estimates imply that
the uniform estimate can be promoted to the $C^r$
bound
\begin{equation}\label{lastbound}
\|\psi-\varphi\|_{C^r(S)}<C_r\de\,.
\end{equation}

Finally, taking into account that the braid $\widetilde L$ is exactly the nodal
set of $\varphi$ and satisfies the transversality
condition~\eqref{trans}, and using the estimate~\eqref{lastbound}, the fact that~$\overline{L'}$ is contained in $S$ and Thom's isotopy theorem~\cite[Theorem 20.2]{AR}, we conclude that there is a diffeomorphism $\Phi_0$ of $\overline{\Om'}$ such that
$\Phi_0(L')$ is a union of components of the zero set
$\psi^{-1}(0)\cap\Om'$. Moreover, the diffeomorphism $\Phi_0$ is $C^r$-close to
the identity. The structural stability of the link $\Phi_0(L')$ for the function
$\psi|_{\Om'}$ also follows from Thom's isotopy theorem and the fact that
$\psi|_{\Om'}$ satisfies the transversality condition
$$
\rank(\nabla \Real \psi(x),\nabla \Imag \psi(x))=2
$$
for all $x\in\Phi_0(L')$. This last equation is a consequence of the
$C^r$-estimate~\eqref{lastbound}, the fact that the function~$\vp$
satisfies the transversality estimate~\eqref{trans} by definition, and
the fact that transversality is an open property under $C^1$-small
perturbations. This completes the proof of the proposition.

\section*{Acknowledgements}

The authors are very grateful to Lourdes F\'abrega for introducing to them the problem of vortex entanglement in high-temperature superconductors and indicating Ref.~\cite{survey}. This work has received funding from the European Research Council (ERC) under the European Union's Horizon 2020 research and innovation programme through the grant agreement~862342 (A.E.). It is partially supported by the grants CEX2019-000904-S, RED2022-134301-T and PID2022-136795NB-I00 (A.E. and D.P.-S.) funded by MCIN/AEI/10.13039/501100011033, and Ayudas Fundaci\'on BBVA a Proyectos de Investigaci\'on Cient\'ifica 2021 (D.P.-S.).

%
%
%

\bibliographystyle{amsplain}

\end{document}